\newtheorem{claim}{Claim}
\newtheorem{theorem}{Theorem}
\newtheorem{lemma}{Lemma}
\newtheorem{problem}{Problem}
\newtheorem{property}{Requirement}
\newtheorem{definition}{Definition}
\newcommand{\Section}[1]{\section{#1}}
\newcommand{\Subsection}[1]{\subsection{#1}}
\title{\LARGE \bf
Cooperation with Disagreement Correction\\
in the Presence of Communication Failures~\footnote{This paper appears as a technical report~\cite{DBLP:journals/corr/PonceSF14} and as an extended abstract~\cite{PSF14}. This work was partially supported by the European Union's Seventh Programme for research, technological development and demonstration, through project KARYON, under grant agreement No. 288195.}}
\author{Oscar~Morales-Ponce~\thanks{This work was done during the postdoctoral stay of Oscar Morales-Ponce at
Chalmers University of Technology, G\"{o}teborg, Sweden, e-mail:oscarmponce@gmail.com}
, 
        Elad~M.~Schiller~\thanks{Elad M. Schiller is with the Department of Computer Science and Engineering, Chalmers University of Technology,
e-mail:elad@chalmers.se}
, 
        and~Paolo~Falcone~\thanks{Paolo Falcone is with the department of Signals and Systems,
Chalmers University of Technology, G\"{o}teborg, Sweden, e-mail:falcone@chalmers.se}
}
\date{}
\begin{document}
\maketitle
\begin{abstract}
Vehicle-to-vehicle communication is a fundamental requirement for maintaining safety standards in high-performance cooperative vehicular systems. The vehicles periodically exchange critical information among  nearby vehicles and determine their maneuvers according to the information quality and the established strategies. However, wireless communication is failure prone. Thus, participants can be unaware that other participants have not received the needed information on time. This can result in conflicting (unsafe) trajectories. We present a deterministic solution that allows all participants to use a fallback strategy in the presence of communication delays. We base our solution on a timed distributed protocol. In the presence of message omission and delay failures, the protocol disagreement period is bounded by a constant (in the order of milliseconds) that may depend on the message delay in the absence of these failures. We demonstrate correctness and perform experiments to corroborate its efficiency. We explain how vehicular platooning can use the proposed solution for providing high performance while meeting the safety standards in the presence of communication failures. We believe that this work facilitates the implementation of cooperative driving systems that have to deal with inherent (communication) uncertainties.
\end{abstract}


\Section{Introduction}
The vision of automated driving systems holds a promise to change the transportation reality. Current deployments that focus on autonomous solutions pose a variety of sensors and actuators for safe driving on the road, e.g., Volvo \emph{drive me} project in Gothenburg and \emph{Google car} in California. These autonomous solutions are based on the vehicles' ability to observe obstacles in their line-of-sight. Vehicle-to-vehicle communication has the potential to improve the system confidence on the sensory information and support advanced vehicular coordination. E.g., when changing lanes and crossing intersections, as well as improving the road capacity by reducing the inter-vehicle distances. However, communication failures can result in hazardous situations due to coordination based on inconsistent information shared by the participating vehicles.

\begin{figure}[!t]
     \centering
     \includegraphics[scale=0.9]{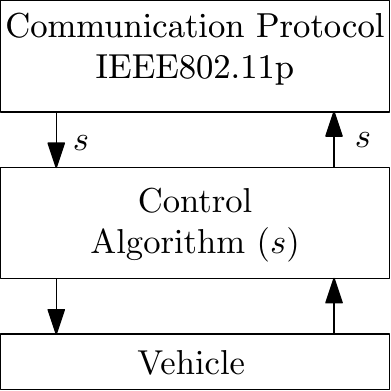}
\hspace{1cm}
     \includegraphics[scale=0.9]{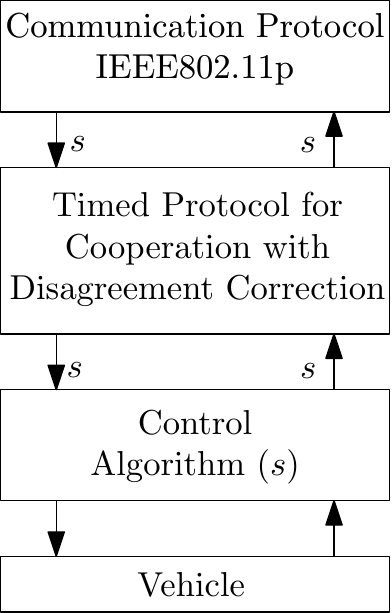}
     \caption{The local cooperative driving architecture without and with the proposed cooperation protocol.}
     \label{fig:architecture}
\end{figure}

Consider an architecture, which Figure~\ref{fig:architecture} (left) depicts, for implementing cooperative driving systems. The Communication Protocol implements the mechanisms for exchanging information with other vehicles. The Control Algorithm plans the vehicle motion according to the sensory information from on-board and remote sources. Note that the local Control Algorithms depend on the (in general vectorial) variable~$s$ (service level). Thus, $s$ is a common piece of information that all vehicles share in order to establish correct cooperation. For instance, in vehicular platooning,~$s$ might include the maximum acceleration levels imposed to all vehicles by the limited braking capabilities of one of them~\cite{6646342}. Clearly, message loss when a new value of~$s$ is established may lead to an inconsistent value in one or more vehicles, and thus, result in an unsafe operation of the entire cooperative system. It is then necessary to have an additional layer, shown in Figure~\ref{fig:architecture} right, between the Communication Layer and the Control Algorithm. We propose to base this additional layer on a Timed Protocol for Cooperation with Disagreement Correction that  resolves disagreements on variable~$s$ among the system vehicles.

Specifically, we address the following research question: How can cooperative systems be used to attain the highest performance without compromising safety in the presence of communication failures? We consider applications in which the individual vehicles estimate their ability to cooperate according to the sensory information quality and communicate their maximum supported cooperative level~\cite{CRP14,DBLP:conf/sss/CasimiroKKSTCPJL12}. The vehicular system then decides on its cooperative service level according only to the received information. However, communication failures can cause the arrival of the needed information not to occur by the deadline. This can bring the vehicles to operate at distinct levels. It is a critical issue to guarantee that the uncertainty period along vehicles occurs only in short time periods. Therefore, we address Problem~\ref{pr:1}.

\begin{problem}[Minimum Longest Uncertainty Period.]
~\label{pr:1}
Is there an upper-bound on the longest period in which the cooperative system may have inconsistent operation service level?
\end{problem}

We note that we cannot solve Problem~\ref{pr:1} using distributed (uniform) consensus algorithms. In the uniform consensus problem, every component (vehicle) proposes a value and the objective is to select exactly on of these proposed values. It is well-known that this problem is not deterministically solvable in unreliable synchronous networks and any $r$-communication rounds algorithm has probability of disagreement of at least $\frac{1}{r+1}$~\cite{DBLP:books/mk/Lynch96} (Theorem 5.1 and Theorem 5.5). Therefore, when the communication failures are too frequent and severe, the uncertainty period cannot be bounded since the components (vehicles) can disagree for an unbounded number of protocol executions. This work presents a communication protocol that guarantees the shortest possible uncertainty period, i.e., a constant time, in the presence of communication failures.

Our solution is based on a communication protocol that collects values from all system components. Once this proposed set $s$ is delivered to all the components, the protocol employs a deterministic function to decide on a single value from $s$ that all system components are to use. The protocol identifies the periods in which there is a clear risk for disagreement due to temporary communication failures, i.e., a period in which $s$ was not delivered by the due time to the entire system. Once such risk is identified, the protocol triggers a correction strategy against the risk of having disagreement for more than a constant number of rounds. Namely, after the occurrence of communication failures that jeopardize safety, all system components will rapidly start a period to reestablish their confidence by returning to the default value. Once the network returns to be stable again, and no communication failures occur, within a constant time, the protocol behaves as if no communication failures has ever occurred.



The correctness proof and its validation show that the proposed solution provides a trade-off between the uncertainty period (in the order of milliseconds) and the occurrence of communication failures. In other words, for shorter round length (and consequently so it the uncertainty period), the vehicles experience more frequently a low service level. However, for a longer round length, the vehicles experience less frequently a low service level. However, the longer the round length is, the longer the time that vehicles spend on disagreements and therefore, the risk of having accidents increases.

This paper also discusses a safety-critical application that facilitates cooperation using the proposed protocol. We assume a baseline adaptive cruise control (ACC) that does not require communication. Then, we extend it to a cooperative one that attains higher vehicle performance, but relies on higher confidence level about the position and velocity of nearby vehicles. We explain how the protocol can provide a timed and distributed mechanism for facilitating decisions about when the vehicles should plan their trajectories according to the baseline application and when according to the extended one that fully utilizes cooperative functionality.

\Subsection{Related work}
The distributed (uniform) consensus problem considers the selection of a single value from a set of values proposed by members of, say, a vehicular system. The solution is required to terminate within a bounded time by which all system components have decided on a common value. The use of the exact (uniform) vs. approximate consensus approaches is explain here~\cite{DBLP:journals/computer/LalaHA91}, where they recommend the use of exact (uniform) consensus due to the simplicity of the system design from the application programmer perspective. The exact consensus approach, in contrast to the approximate one, rests on a foundation of clearly defined requirements and is amenable to formal methods and analytical validation.

A number of impossibility results consider distributed consensus in general (see~\cite{DBLP:journals/jacm/FeketeLMS93,DBLP:journals/jacm/FischerLP85,DBLP:journals/dc/FischerLM86}). In~\cite{DBLP:books/mk/Lynch96}, the author shows that the presence of communication failures makes impossible to deterministically reach consensus (Theorem 5.1) and any $r$-round algorithm has probability of disagreement of at least $\frac{1}{r+1}$ (Theorem 5.5). This implies that there are no guarantees that vehicles can reach consensus on bounded time since vehicle-to-vehicle communications are prone to failures. Moreover, when the communication failures are too frequent and severe, vehicles can fail to reach  consensus for an unbounded number of consecutive times. We therefore abandon consensus-based decision algorithms, and prefer to focus on solutions that offer early  fall-back strategies against the risk of having disagreement for more than a constant number of rounds.

The existing literature on distributed (uniform) consensus algorithms with real-time requirements does consider processor failures. However, it often assumes timed and reliable communication. For example, in~\cite{DBLP:journals/tc/HermantL02} the authors give an algorithm that reaches agreement in the worst case  in time that is sublinear in the number of processors and  maximum  message delay. In~\cite{DBLP:conf/wdag/AguileraLT02}, the authors provide a time optimal consensus algorithm that reaches consensus in time $O(D(f+1))$ in the worst case where $D$ is the maximum  message delay and $f$ the maximum number of processors that can crash. In this paper, we do not assume reliable communication. Thus, message drops can occur independently among processors at any time.

Group communication systems~\cite{DBLP:journals/csur/ChocklerKV01} treat a group of participants as a single communication endpoint. The group membership service~\cite{DBLP:journals/tmc/DolevSW06,DBLP:journals/acta/DolevS04,DBLP:journals/tpds/DolevS03} monitors the set of recently live and connected participating system components whereas the multicast service delivers messages to that group under some delivery guarantees, such as delivery acknowledgment. In this paper we assume the existence of a membership service and a best-effort (single round solution) dissemination (multicast) protocol that has no delivery acknowledgment.

There exists literature on adaptive cruise control~\cite{stankovic2000decentralized, zhang1999using} as well as vehicle platooning~\cite{shladover1989longitudinal,shladover1991automated}. In~\cite{DBLP:conf/vnc/Lann11}, the author considers vehicle platooning and lane merging, and  bases his construction on distributed high level communication primitives. We consider a different failure model for which there is no deterministic implementation for these communication primitives.

The studied problem is motivated by the KARYON project~\cite{DBLP:conf/sss/CasimiroKKSTCPJL12,DBLP:conf/icdcsw/CasimiroPPS14,DBLP:conf/safecomp/BergerPPS14}. The KARYON project aims to provide a predictable and safe coordination of intelligent vehicles that interact in inherently uncertain environments. It proposes the use of a safety kernel that enforces the service level that the vehicle can safely operate. A cooperative service level can ensure that vehicles follow the same performance level. In this paper, we study a communication protocol that  implements the KARYON's cooperative service level evaluator. In~\cite{CRP14}, we present the architecture that considers the interactions between the safety kernel, a local dynamic map and the cooperative service level evaluator. Unlike the earlier abstract presentation of the cooperative service level evaluator, this paper provides in detail, the design and analysis of the communication protocol.

\Subsection{Our contribution}
We study an elegant solution for cooperative vehicular systems that have to deal with communication uncertainties. We base the solution on a communication protocol that, we believe, can be well understood by designers of safety-critical, automated and cyber-physical systems. We explain how the designers of fault-tolerant cooperative applications can use this solution to deal with communication failures when uniformly deciding on a shared value, such as $s$.

We consider cooperative applications that must periodically decide on a shared values $s$. Since the consensus problem cannot be deterministically solved in the presence of communication failures, the system is doomed to disagreed on the value of $s$ (in the presence communication failures that are frequent and severe). We bound the period in which the vehicles can be unaware of such disagreements with respect to $s$. We prove and validate that this bound is no more than one communication round (in a vehicular system that deploys a single-hop network of wireless ad hoc communication). We also study the percentage of time during which the system avoids disagreement on $s$ using ns-3 simulations.

We exemplify how the proposed solution helps to guarantee safety. We consider vehicles that operate in a cooperative operational mode as long as they are aware that all the nearby vehicles are also in the same mode (with at most one communication round period of disagreement). However, if at least one vehicle is suspecting that another vehicle is not, all vehicles switch, within one communication round period, to a baseline operational mode so that the safety standards are met.

\Subsection{Document structure}
We list our assumptions and define the problem statement (Section~\ref{s:sys}), before providing the timed protocol for cooperation with disagreement correction (Section~\ref{s:dcp}) and its correctness proof (Section~\ref{s:c}). As protocol validation study, we consider computer simulation (Section~\ref{s:eva}). We discuss cooperative vehicular application (Section~\ref{s:cva}) and an example before the conclusions  (Section~\ref{s:con}).

\Section{System Settings}
\label{s:sys}
We consider a message passing system that includes a set $members$ of $n$ communicating prone-resilient vehicles. We refer to the vehicles with id $i$ as $p_i$. We assume that all vehicles have access to a common global clock with a sub-microsecond offsets by calling the function $clock()$. This could be implemented, for example, using global positioning systems (GPS)~\cite{allan1980accurate}. Hence, we assume that the maximum time difference along vehicles is at most $syncBound$. We consider that the system runs on top of a timed and fault-tolerant, yet unreliable, dissemination protocol, such as~\cite{DBLP:journals/tit/BoydGPS06,DBLP:journals/dc/GeorgiouGK11}, that uses $gossipSend_i(m)$ to broadcast message $m$ from vehicle $p_i \in members$ to all vehicles in $members$. We assume that end-to-end message delay is at most $messageDelay$ time. Thus, messages are either  delivered within $messageDelay$ time or omitted. The constant $messageDelay$ depends on distinct factors such as the MAC protocol that is used, vehicle velocity, interference, etc.  For example, this bound can be set to $100$ms  or less using, for example, dedicated short-range communications (DSRC)~\cite{camp2005vehicle}.

Vehicle $p_j$ receives $m$ from $i$ by raising the event $gossipReceive_j(i, m)$. We consider a fully connected  network topology. However, the network can arbitrarily decide to omit messages, but not to delay them for more than $messageDelay$ time. These assumptions allow the protocol to run in a synchronous round based fashion. We consider rounds of time $roundLength$ where $roundLength \geq messageDelay + 2syncBound$.

Every vehicle $p_i$ executes a program that is a sequence of {\em (atomic) steps}. An input event can  be either the receipt of a message or a periodic timer going off triggering $p_i$ to start a new iteration of the do forever loop.

We define the \emph{uncertainty period} as the period that vehicles can disagree. We say that there was a {\em communication failure} at round $r$ if there exists a vehicle that has not received the messages from all  vehicles during round $r$.

\Subsection{The task}
The system's task is to satisfy requirements~\ref{prp:dis1} to~\ref{prp:dis3}, which consider Definition~\ref{def:scp}.

\begin{definition}[Stable Communication  Period]
\label{def:scp}
A stable communication period $X[r_1, r_2]$ is the period of $r_2-r_1$ rounds in which the vehicles do not experience communication failures, i.e., all vehicle receive all messages during these periods. Otherwise,  it is called an unstable communication period, denoted by  $Y[r'_1, r'_2]$.
\end{definition}

We say that a stable communication period $X[r_1, r_2]$ is  maximal when rounds $r_1-1$ and $r_2+1$ are unstable communication periods. Analogously, we define a maximal unstable communication period $Y[r'_1, r'_2]$, see Figure~\ref{fig:be}. Thus, in any run, the communication may go through maximal stable and unstable periods (and then perhaps back to stable) for an unbounded number of times. Requirements~\ref{prp:dis1} to~\ref{prp:dis3} deal with what the system output at every vehicle should be when it goes between the different periods.

\begin{figure}[ht!]
    \centering
    \includegraphics[scale=1]{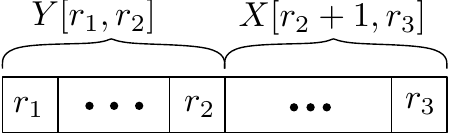}
    \caption{Maximal unstable communication period $Y[x_1, x_2]$ followed by a maximal stable communication period $X[x_2+1, x_3]$.}
    \label{fig:be}
  \end{figure}

\begin{property}[Certainty Period]
\label{prp:dis1}
During a stable period no two vehicles use different values. Moreover, within a bounded prefix of every stable period, there is a suffix during which no uses the default return value.
\end{property}

\begin{property}[Disagreement Correction]
\label{prp:dis2}
Every unstable period has a suffix named the {\em disagreement correction period} during which no two vehicles use different values. During this period all vehicles use the default return value. 
\end{property}

\begin{property}[Bounded Uncertainty Period]
\label{prp:dis3}
The suffix of a stable period during which some vehicles may use different values is called the {\em uncertainty period}. We require it to be bounded.
\end{property}

We show that any system run of the proposed solution fulfills requirements~\ref{prp:dis1} to~\ref{prp:dis3}. Specifically, we demonstrate Theorem~\ref{th:main} (Section~\ref{s:dcp}).

\begin{theorem}\label{th:main}
The proposed protocol (Algorithm~\ref{alg:TDDP}) fulfills requirements~\ref{prp:dis1},~\ref{prp:dis2}~and~\ref{prp:dis3}, where the uncertainty period is bounded by one round. Moreover, if vehicles do no experience communication failures, the disagreement correction holds for at most one round.
\end{theorem}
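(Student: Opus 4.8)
The plan is to establish the three requirements separately, tracking the protocol state of each vehicle $p_i$ across round boundaries. Since the actual algorithm (Algorithm~\ref{alg:TDDP}) is not yet in front of us, I will organize the argument around the natural invariant that the protocol must maintain: at the start of each round every vehicle holds (i) the set $s$ it has most recently collected, (ii) a decided output value computed by the deterministic choice function on $s$, and (iii) a boolean ``risk'' flag recording whether the previous round was a communication failure from that vehicle's local viewpoint. First I would prove a \emph{one-round detection lemma}: if round $r$ is a communication failure, then by the end of round $r+1$ every vehicle has either itself failed to receive some message in round $r$, or has received, in round $r+1$, a report from a vehicle that did; hence every vehicle raises its risk flag within one round. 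This uses only $roundLength \ge messageDelay + 2\,syncBound$ (so a message sent at the start of round $r+1$ is delivered and processed within the same round) together with the assumption that the detection information is piggybacked on the ordinary gossip message.

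Next I would handle Requirement~\ref{prp:dis2} (Disagreement Correction). Take a maximal unstable period $Y[x_1,x_2]$. By the detection lemma, once a failure is observed every vehicle sets its output to the default return value $\bot$ and keeps it there as long as it continues to observe (or hear reports of) failures. The only subtlety is the boundary round $x_2$, the last failing round: a vehicle that itself received everything in round $x_2$ but whose failure was ``invisible'' to others could lag; the detection lemma closes exactly this gap, guaranteeing that by round $x_2+1$ all vehicles are at $\bot$. So the disagreement correction period is the suffix of $Y$ from the round in which the last-detected failure has propagated to everyone, and on it all vehicles agree on $\bot$. If in fact no communication failures ever occur, $Y$ is empty and the statement ``disagreement correction holds for at most one round'' is the degenerate instance of the same bound; if a single isolated failing round occurs, detection and correction complete within that one extra round.

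Then I would prove Requirement~\ref{prp:dis1} and~\ref{prp:dis3} together. Consider a maximal stable period $X[x_2+1,x_3]$ following $Y[x_1,x_2]$. Entering $X$, all vehicles are at $\bot$ (from the correction period just established), and in the first round of $X$ every vehicle broadcasts and every broadcast is delivered, so by the end of round $x_2+1$ every vehicle holds the \emph{same} set $s$; applying the deterministic choice function, all vehicles compute the same output from round $x_2+2$ onward. Hence the uncertainty period --- the prefix of $X$ during which outputs may differ --- is contained in the single round $x_2+1$, which is the claimed bound of one round. For the ``Certainty Period'' clause, observe that once the set $s$ is the full, fresh set (from round $x_2+2$ on) and no failure is observed, the protocol's risk flag is cleared, so the output is the choice-function value on the real proposals rather than $\bot$; thus within a bounded prefix of $X$ (one or two rounds, depending on how the flag clearing is scheduled) there is a suffix on which no vehicle outputs $\bot$, as required.

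The main obstacle I anticipate is the \emph{boundary-round bookkeeping} around the transition from unstable to stable communication --- precisely pinning down, round by round, which vehicles have the stale set versus the fresh set, and showing that the window in which these can disagree is exactly one round and not two. This requires being careful about the order of operations within an atomic step (does a vehicle apply the choice function to the set it held at the start of the round, or the one it just finished assembling?), about the $syncBound$ slack so that a round-$r$ message genuinely cannot arrive in round $r+1$ or later, and about the interaction between the risk flag and the output: a vehicle must not prematurely leave $\bot$ while a neighbor is still in correction, nor linger at $\bot$ once the period is provably stable. Everything else --- the detection lemma, the agreement-on-$\bot$ argument, the agreement-on-the-choice-value argument --- is then a routine consequence of ``one round suffices for a broadcast to reach everyone.''
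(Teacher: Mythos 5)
Your overall strategy matches the paper's: the paper's Lemma~\ref{th:boundeduncertainty} is built from exactly the ingredients you list --- a one-round propagation argument (a vehicle that misses a message in round $r$ sets its own datum to $\bot$ and broadcasts it in round $r+1$, so every other vehicle either receives that $\bot$-report or itself fails and self-detects), the convention that $\bot \in s$ forces $decide(s)=\bot$, a consistency claim showing that during stable rounds all vehicles assemble identical sets $s$ (which is where the $syncBound$/$messageDelay$ timing and the no-overwrite-under-relaying argument you flag as ``obstacles'' actually live), and then determinism of $decide$. Your ``risk flag'' is realized in the algorithm as the $\bot$ written into $data[i]$, so there is no real divergence of mechanism.

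There is, however, one concrete misstep in how you assemble Requirement~\ref{prp:dis3}. You locate the one-round uncertainty window at round $x_2+1$, the first round of the stable period $X[x_2+1,x_3]$, arguing that vehicles may differ there and agree from $x_2+2$ onward. But by your own disagreement-correction argument, all vehicles already agree (on $\bot$) at round $x_2+1$; the round in which vehicles genuinely may use different values is $r_1+1$, i.e., the round immediately after the \emph{first} failing round of the unstable period, when the vehicles that experienced the failure have switched to $\bot$ while the vehicles that received everything still apply $decide$ to a complete set. This is precisely where the paper's point~(1) places the disagreement, and it is the only place your detection lemma leaves uncovered: you state that detection completes within one round but never explicitly observe that \emph{during} that one round of lag the outputs can differ, and that this is the entire uncertainty period. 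Relatedly, your detection lemma's disjunction needs a third branch --- a vehicle that misses the $\bot$-report in round $r+1$ has itself suffered a failure in round $r+1$ and therefore self-detects --- which your later prose gestures at but the lemma statement omits. Both points are repairable with the machinery you already have, but as written the proof of Requirement~\ref{prp:dis3} bounds the wrong window.
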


\Section{The Disagreement Correction Protocol}
\label{s:dcp}
We present the communication protocol in which the participants exchange messages until a deadline. These messages can include information, for example, about nearby vehicles as well as the confidences that each vehicle has about its information. Once everybody receives the needed information from each other, the participants can locally and deterministically decide on their actions. In case of a communication failure, each participant that experiences a failure imposes the default return value for one round.

Each vehicle $p_i \in members$ executes the protocol (that Algorithm~\ref{alg:TDDP} presents). It uses a do forever loop for implementing a round base solution. It accesses the global clock (line~\ref{ln:clock}) and checks whether it is time for the vehicle to send information about the current round (line~\ref{ln:roundClock1}). A vehicle starts sending  messages at $syncBound$ time from the beginning of each round and $syncBound+messageDelay$ before  of the end of each round using the $gossipSend()$ interface (Line~\ref{ln:gossipCall}). Recall that $syncBound$ is the maximum time difference over the vehicles and $messageDelay$ is the longest time that a message can live in the network. Next, it tests whether the current round number $myRound$ points to the current round in time (line~\ref{ln:roundClock}). A new round starts when  $myClock \div roundLength$ is greater than $myRound$.

At the beginning of every round, the protocol first keeps a copy of the collected data and the received information, and updates the round counter, as well as nullifying $data$ and $ack$ (line~\ref{ln:roundNull}). Then, it tests whether it has received all the needed information for the previous round (line~\ref{ln:arrivalTest}). Suppose that a communication failure occurred in the previous round, the protocol sets the data to be sent to the default return value $\bot$ (line~\ref{ln:assingment1}). It also writes to $controlLoop()$ interface the received information as well as the  default return value $\bot$ (line~\ref{ln:writeoutput1}). However, in the case that all messages of the previous round have arrived on time, the system reads the application information using $readState()$ interface. It also writes to $controlLoop()$ interface the received information  as well as the  value  that the deterministic function $decide()$ returns (line~\ref{ln:imposeEnd}).

The proposed protocol interfaces with the gossip (dissemination) protocol by sending messages ($gossipSend()$) and receiving them ($gossipReceive()$) periodically. The protocol locally stores the arriving information from $p_j \in members$ on each round in $data[k]$ and waits for the round end before it finishes accumulating all arriving information. More specifically, for each message that is reported with the same round, the protocol stores the data from $p_k$  and sets the acknowledgment variable  to true, if the message comes directly from $p_j$, ($k = j$), or transitively from $p_j$ without considering its own values ($ack_j[k]$ and $k \not=i$).

The correctness proof shows that, in the presence of a single communication failure, there could be at most one disagree round in which different system components use different values. Moreover, the influence of that single failure will last for at most two rounds, which is the shortest period possible. Note that Algorithm~\ref{alg:TDDP} handles well any sequence of communication failures.

\begin{algorithm*} [bt]
{\fontsize{9pt}{10pt}\selectfont
\begin{algorithmic}[1]
\STATE {\bf Constant:} {$members = \{p_1, p_2, ..., p_n\}$: The system vehicles.}

\STATE {\bf Constant:} $\bot$: Denotes a void (initialized) entry, as well as the default return value.
\STATE {\bf Constant:} $syncBound$: The maximum time difference among vehicle clocks.
\STATE {\bf Constant:} $maximumDelay$: The maximum time that a message time can live in the network.
\STATE {\bf Constant:} $roundLength >  2syncBound + maximumDelay$: The length of a round.
~~\\

\STATE {\bf Variable:} $myRound \gets 0$: Current communication round.
\STATE {\bf Variable:} $myClock \gets 0$: Current clock.
\STATE {\bf Variable:} $data[n] =\{\ldots \}$: Application data where $data[k]$ is the data received at round $myRound$ from member $p_k$.
\STATE {\bf Variable:} $ack[n] = \{false, \ldots \}$: Acknowledge for data reception where $ack[k]$ is true if $p_i$  has received (directly or indirectly)
the message from $p_k$ of the current round.
~~\\

\STATE {\bf Interface} 	$gossipSend()$: Disseminate information to the system members.
\STATE {\bf Interface} 	$gossipReceive()$: Dispatch arriving messages.
\STATE {\bf Interface} 	$readState()$: Return a datum to be sent.
\STATE {\bf Interface} 	$controlLoop()$: Write decided output.
\STATE {\bf Interface} 	$decide(s)$ Deterministically determines an item from $s$. We assume that whenever $\bot \in s$, then $\bot = decide(s)$. \label{al:fun}
~~\\
~~\\

\STATE {\bf Upon }$gossipReceive(j, <round_{j}, data_{j}, ack_j>)$ \label{ln:gossipReceive}
\IF {$(myRound = round_j)$}
	\FORALL{$p_k \in members$}		
		\IF {($ack_j[k]$ \AND $i \not= k$) \OR ($k = j$)}
			\STATE  $(data[k],ack[k]) \gets (data_j[k], true)$
		\ENDIF
	\ENDFOR
\ENDIF
\LOOP
	\STATE    $myClock \gets clock()$ \label{ln:clock}
	\IF{$myClock  \in [roundLength \cdot myRound+syncBound, roundLength\cdot (myRound +1)  - (syncBound + maximumDelay)]$}  \label{ln:roundClock1}

		\STATE   $gossipSend(i, <myRound, data, ack>)$\label{ln:gossipCall}
	\ENDIF
	\IF{$myRound < myClock \div roundLength$}  \label{ln:roundClock}
		\STATE $(s, r, myRound, data, ack[k]) \gets (data, ack, myClock \div roundLength, \{\bot, \ldots \}, k=i: \forall p_k \in members)$\label{ln:roundNull}
		\IF{$false \in \{r[k]: p_k \in members\}$} \label{ln:arrivalTest}
			\STATE $data[i] \gets \bot$ \label{ln:assingment1}
			\STATE $controlLoop(s, \bot)$	\label{ln:writeoutput1}
		\ELSE
			\STATE $data[i]  \gets readState()$\; \label{ln:assingment3}
			\STATE  $controlLoop(s, decide(s))$ \label{ln:imposeEnd}
		\ENDIF
	\ENDIF
\ENDLOOP
\end{algorithmic}
  \caption{Timed Protocol for Cooperation with Disagreement Correction (code for $p_i$)}
   \label{alg:TDDP}
}
\end{algorithm*}

\Section{Correctness} \label{s:c}
We prove that  Algorithm~\ref{alg:TDDP}  follows requirements~\ref{prp:dis1} to~\ref{prp:dis3}.	

%
%
%
%

\begin{lemma} \label{th:boundeduncertainty}
Let  $Y[r_1, r_2]$ be any maximal unstable communication period followed by a maximal stable communication period $X[r_2+1, r_3]$.
The following three statements hold.

(1) {\bf Bounded Uncertainty Period}. Vehicles may have disagreements at round $r_1+1$.

(2) {\bf Disagreement Correction}. All vehicles use the default return value during $[r_1+2, r_2+1]$.

(3) {\bf Certainty Period}. Vehicles use the same value during $[r_1+2, r_3+1]$.
\end{lemma}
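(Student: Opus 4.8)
The plan is to isolate two facts about what happens across a single round boundary and then obtain (1)--(3) by a short case analysis on where a round sits relative to $r_1,r_2,r_3$. First a convention: I will call a round $\rho$ \emph{unstable} exactly when some vehicle fails to obtain, directly or through the relay of lines~\ref{ln:gossipReceive}--\ref{ln:gossipCall}, the round-$\rho$ datum of some other vehicle; this is precisely the case in which that vehicle's $ack$ array still holds a $false$ at the round boundary, hence exactly the case that makes the test of line~\ref{ln:arrivalTest} succeed and the default value $\bot$ be written in line~\ref{ln:writeoutput1}. Since $roundLength$ exceeds $2syncBound+maximumDelay$, every non-omitted message is delivered before its round ends, so I may treat rounds as clean synchronous rounds; I write ``the value $p_i$ uses at round $r$'' for the second argument of the $controlLoop$ call $p_i$ performs at the $(r-1)\to r$ boundary (a function of the data exchanged in round $r-1$), and $v_i(r)\in\{\bot,readState()\}$ for what $p_i$ stores into $data[i]$ at that boundary (lines~\ref{ln:assingment1},~\ref{ln:assingment3}).

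Then I would prove the \textbf{stable-round fact}: if round $\rho$ is stable, all vehicles use the same value at round $\rho+1$. In a stable round every vehicle ends with $ack[k]=true$ for every $k$, so every vehicle takes the \emph{else} branch at the boundary; and a one-line induction over the $gossipReceive$ events of round $\rho$ shows the invariant ``$ack_j[k]=true$ implies $data_j[k]=v_k(\rho)$'', so each vehicle's accumulated vector is the \emph{same} vector $(v_1(\rho),\dots,v_n(\rho))$ and, $decide$ being deterministic, all output the same value. Next the \textbf{poisoning fact}: if round $\rho$ is unstable, all vehicles use $\bot$ at round $\rho+2$. Unstability gives a vehicle $p_m$ whose $ack$ array is incomplete at the end of round $\rho$, so $p_m$ takes the \emph{if} branch and sets $v_m(\rho+1)=\bot$ (line~\ref{ln:assingment1}). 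The only line that writes a fresh value into slot $m$ is $p_m$'s own update, which wrote $\bot$; every other change to some $data_i[m]$ is a copy (line~\ref{ln:gossipReceive}) of a value that is already $\bot$ by induction. Hence at the end of round $\rho+1$ every vehicle has $data_i[m]=\bot$, so at the $(\rho+1)\to(\rho+2)$ boundary $\bot$ lies in every accumulated set $s_i$; whether $p_i$ then takes the \emph{if} branch (it writes $\bot$, line~\ref{ln:writeoutput1}) or the \emph{else} branch (it writes $decide(s_i)$, which equals $\bot$ by the assumed property of $decide$, line~\ref{al:fun}), $p_i$ uses $\bot$ at round $\rho+2$.

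The three statements then follow by bookkeeping. For \textbf{(2)}: for each $r\in[r_1+2,\,r_2+1]$ the round $r-2$ lies in $[r_1,r_2]$ and is unstable, so the poisoning fact makes every vehicle use $\bot$ at round $r$ (in fact this holds on all of $[r_1+2,\,r_2+2]$). For \textbf{(1)}: round $r_1-1$ is stable by maximality of $Y$ (if it does not exist, round $r_1$ is the initial round, where all vehicles trivially agree), so the stable-round fact gives agreement at round $r_1$; on $[r_1+2,\,r_2+2]$ all vehicles use $\bot$, hence agree; and for $r\in[r_2+3,\,r_3+1]$ the round $r-1$ lies in the stable period $[r_2+2,\,r_3]$, so the stable-round fact gives agreement at round $r$. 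These ranges cover $[r_1,\,r_3+1]\setminus\{r_1+1\}$, so within $[r_1,r_3]$ the only round that may carry disagreement is $r_1+1$ (and disagreement can indeed arise there: in round $r_1$ a vehicle with incomplete $ack$ uses $\bot$ while one with complete $ack$ uses $decide((v_1(r_1),\dots,v_n(r_1)))$, which differ when the latter is not $\bot$). Finally \textbf{(3)} is this agreement statement restricted to $r\in[r_1+2,\,r_3+1]$, a set that omits $r_1+1$.

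The main obstacle is not the index bookkeeping but the two single-round invariants that drive the facts above: that in a stable round every vehicle's accumulated vector is the \emph{identical} full vector $(v_1(\rho),\dots,v_n(\rho))$ --- which needs the consistency ``$ack_j[k]=true\Rightarrow data_j[k]=v_k(\rho)$'' to survive arbitrarily long relay chains within the round --- and that a $\bot$ deposited by one vehicle into its own slot can never be overwritten anywhere by a non-$\bot$ value. Both are short inductions over the sequence of $gossipReceive$ events, but they require reading lines~\ref{ln:gossipReceive}--\ref{ln:gossipCall} precisely, in particular the guard $i\neq k$, which forbids a vehicle from adopting someone else's stale copy of its \emph{own} slot. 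A smaller point that must be stated explicitly is the identification of the informal ``communication failure at round $r$'' with the concrete event that line~\ref{ln:arrivalTest} fires at the $r\to r+1$ boundary, so that ``unstable'' in the two facts matches ``unstable'' in the hypothesis.
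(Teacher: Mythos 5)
Your proposal is correct and follows essentially the same route as the paper: the ``stable-round fact'' corresponds to the paper's claim that $s_k(r)=s_\ell(r)$ throughout a stable period (proved via the same consistency-under-relay invariant hinging on the $i\neq k$ guard), and the ``poisoning fact'' is exactly the paper's argument for Disagreement Correction ($\bot$ deposited by a failed vehicle propagates or is missed, and $decide$ absorbs $\bot$). Your version merely packages these as two per-round lemmas followed by explicit interval bookkeeping, which is a slightly cleaner organization of the same ideas.
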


\begin{proof}
Let $s_i(r)$ be the set of messages that vehicle $p_i$ receives from all the vehicles $p_j \in P$, either directly or indirectly, and that $p_j$ has sent during round $r$. Observe that each vehicle decides the value to be used on round $r+1$ based on the received information at round $r$ (lines~\ref{ln:writeoutput1}~and~\ref{ln:imposeEnd}). We claim that $s_k(r) = s_i(r)$ for $\forall p_k, p_\ell \in P$ and $\forall r \in [x_2+1, x_3]$. Note that this implies that no two vehicles use different values when processing round $r$, because vehicle $p_i$ determines its output value according to the deterministic function $decide(s_i)$.

\begin{claim}
$s_k(r) = s_\ell(r)$ for $\forall p_k, p_\ell \in P$ where $r \in [x_2+1, x_3]$.
\end{claim}

\begin{proof}[Claim Proof]
First we show that each vehicle maintains consistent its own information  over each round. Observe that lines between~\ref{ln:roundNull}~and~\ref{ln:imposeEnd}  are executed once during round $myRound$ since $myRound$  is set to  $clock() \div roundLength$ and $clock()$ always returns larger values. Therefore, each vehicle $p_i$ loads its message on the register $data[i]$ once during $myRound$. Thus, assume that vehicle $v_i$ overwrites its $data[i]$ when receiving a message from  vehicle $v_j$. Since the condition ensures that it loads $data[k]$ only if either $i\not=k$ or $k=j$, we conclude that $i=j=k$. Thus, $data[i]$ is consistent on $p_i$ during round $myRound$.

We say that a message $m_k$ is sent transitively, if $p_i$ receives $m_k$ from $p_j$ where $j\not=k$. We show that the message transitivity maintains the consistency of the messages during  a stable communication period. We argue by contradiction. Assume that there are two messages, $m_i \in s_k$ and $m'_i \in s_\ell$ such that $m_i\neq m'_i$. Consider the first time that $m_i,m'_i$ were sent. Observe that $p_i$ sent the two messages. A contradiction since $p_i$ maintains consistent its own information  over each round.

The claim follows by showing that at the end of the current round $myRound$, it holds that $s_k(myRound) = s_\ell(myRound)$. Indeed, since messages of each round are sent $(syncBound + maximumDelay)$ time units before the end of $myRound$ and after $syncBound$ time units after the beginning of $myRound$, vehicles receive messages only from the current round. Recall that $syncBound$ is the maximum difference time among vehicle clocks and  $maximumDelay$ is the maximum time that a message can live in the network.
\end{proof}

(1) {\bf Bounded Uncertainty Period}. Consider round $r_1$. Since  $Y[r_1, r_2]$ is an unstable communication period, there exists a vehicle $p_i$ that did not receive a message from all vehicles, i.e., $\bot$ is in $ack$. Observe that vehicle $p_j$ is unaware that $p_i$ had experienced a communication failure during round $r_1$. Let us assume that $p_j$ did not experience any communication failure. Therefore, $p_j$ uses the deterministic value that $decide()$ returns on round $r_1+1$. However, $p_i$ imposes the default return value (line~\ref{ln:writeoutput1}) since it had experienced a communication failure. Thus, during round $r_1+1$, $p_i$ sends the default return value by setting  $data[i]$ to $(\bot)$  and uses it (lines~\ref{ln:assingment1}~and ~\ref{ln:writeoutput1}, respectively). Therefore, as long as no vehicle misses $p_i$'s message, the first default return value of $p_i$ arrives along round $r_1 + 1$. Thus, during round $r_1+1$, $p_j$ uses a distinct value than $p_i$.

(2) {\bf Disagreement Correction}. We show that all vehicles use the default return value in  round $r \in [r_1+2, r_2+1]$. It is sufficient to show that there exists at least one default return value in  $s_j(r)$ in each round $r  \in  [r_1+1, r_2]$. Assume that at round $r \in [r_1, r_2]$, some vehicle $p_k$ experienced a communication failure. Therefore, at round $r+1$  all other vehicles either receive the default value of $p_k$  or  receive no message from $p_k$. Thus, all vehicles use the same value (default return value) in  each round $r \in [r_1+2, r_2+1]$ (lines~\ref{ln:assingment1}~and~\ref{ln:imposeEnd}). This is due to the definition of the function $decide()$ (line~\ref{al:fun}) and the fact that each vehicle writes the default return value if it experiences a communication failure.

(3) {\bf Certainty Period}.
We show that during  $[r_1+2, r_3+1]$ all vehicles use the same values. Indeed, from the point (2), every vehicle uses the default return value in every round $r \in [r_1+2, r_2+1]$. It remains to show that they use the same value in each round $r \in [r_2+2, r_3+1]$. From the claim, $s_i(r) = s_k(r)$ for each pair $p_i,p_k \in P$ during $[r_1, r_3]$  since all vehicles received the information from each other vehicle. The lemma follows since vehicles decide the value to be used on round $r+1$
based on the received information  at round $r$ (lines~\ref{ln:writeoutput1}~and~\ref{ln:imposeEnd}) using the deterministic function $decide()$.

\end{proof}

\begin{proof}[Theorem~\ref{th:main}]
It follows directly from Lemma \ref{th:boundeduncertainty}.
\end{proof}

\Section{Evaluation}\label{s:eva}
We consider a cooperative system that has two service levels where the lowest one is the default service level to which the system falls-back to in the presence of communication failures. For example, this can be a vehicular system in which the cooperative service level is the highest, and the autonomous service level is the lowest (default) one. Since we focus on communication failures, the experiments assume that every system component can always support the highest service level, and thus read input ($readState$) always returns the highest service level. We use computer simulation to validate the protocol as well as its efficiency. For the efficiency, we consider the {\em reliability} performance measure which we define as the percentage of communication rounds during which the protocol allows the system to run at its highest service level. First, we validate that the disagreement period is of at most one round and next the reliability of the protocol.

We simulate the  protocol using ns-3.~\footnote{http://www.nsnam.org/} We choose IEEE 802.11p as the communication channel with a log-distance path loss model and Nakagami fading channel model. Since DSRC technologies support end-to-end message delay of less than $100$ms~\cite{camp2005vehicle}, we fix the message delay to $100$ms. We consider a synchrony bound of $5$ms, say, using GPS~\cite{allan1980accurate} or a distributed clock-synchronization protocol. We implement a straightforward gossip protocol in which every node retransmits message every $50$ms.

We validate that the disagreement period is of at most one round. We plot in Figure~\ref{fig:be} the decision that $4$ vehicles took  independently during  $25$ rounds using the protocol under frequent communication failures. We set the round length to $160$ms so that messages can be transmitted twice in each round. Observe that at round $20$ vehicles $1$ and $2$ reduce the service level due to a communication failure,  but vehicles $3$ and $4$ still continue in the highest level of service. However, at round $21$, they lower their service level. Although vehicles do not operate on distinct service levels for more than one round, the service level of some vehicles may be oscillating. We can reduce this effect by increasing the round length. However,
the uncertainty period also increases.

\begin{figure}[!ht]
    \centering
    \includegraphics[scale=0.52]{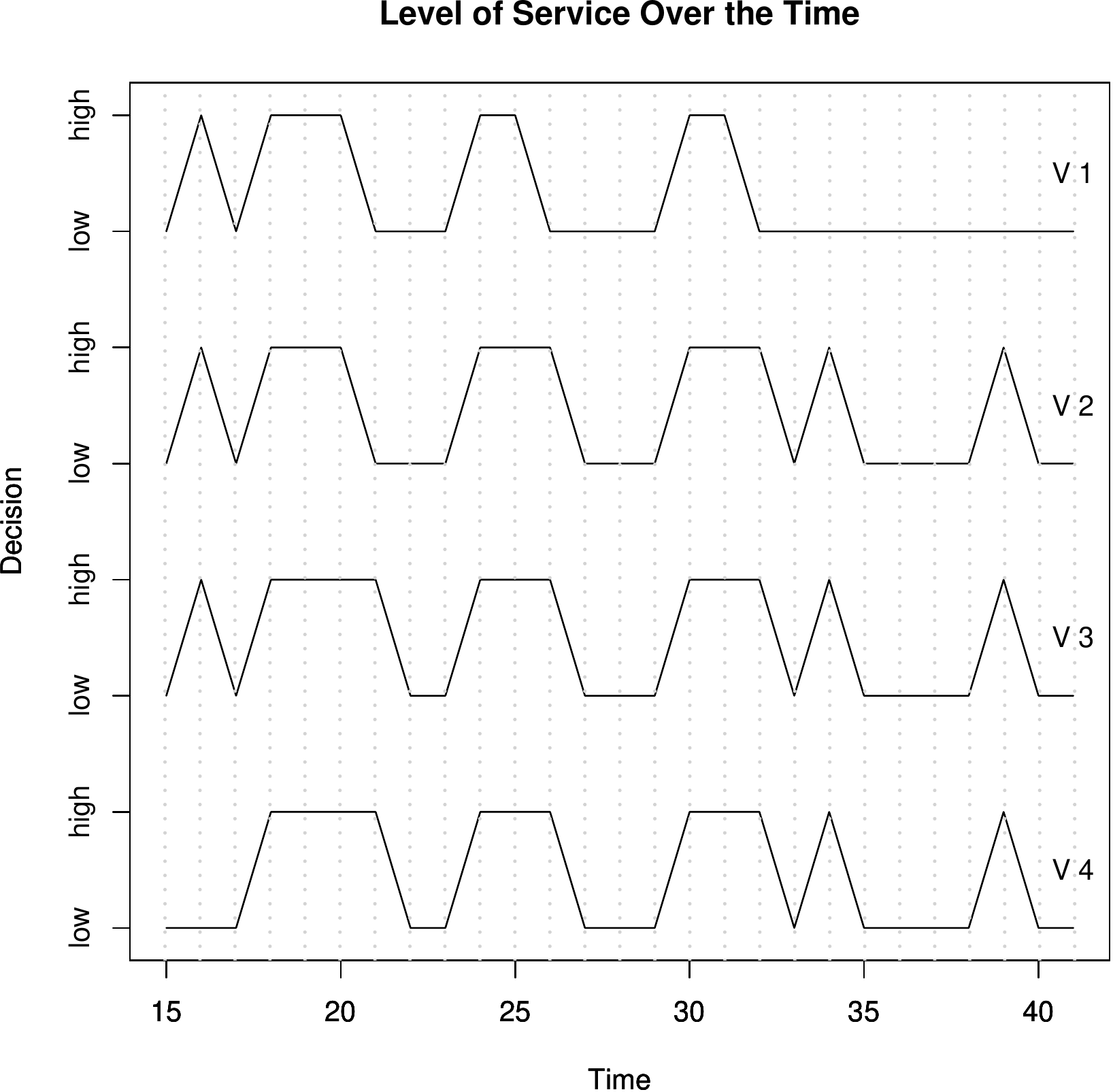}
       \caption{Vehicle behavior under frequent communication failures.  The plot shows the decision    that    four vehicles took among two service levels during $25$
        rounds using the protocol. }
    \label{fig:be}
  \end{figure}

Note the trade-off between the upper bound on the disagreement period, which is one communication round, and the success rate of the gossip protocol, which decreases as the round length becomes shorter. The type of gossip protocol as well as the number of system components also influences this success rate. We use computer simulation to study how these trade-offs work together and present the reliability.


We consider three round lengths  between $160$ms and $360$ms with intervals of $100$ms so that vehicles can transmit $2,4$ and $6$ messages in each round, respectively. We variate the number of vehicles  between two and eight. The reliability of the system is plotted in Figure~\ref{fig:rel}. We run each experiment for $360$ simulation seconds. During the simulations, we observe a packet drop average of $0.1436347$. The packet drop rate per number of vehicles is presented in  Table~\ref{fig:packetdrop}. Further, the  percentage of time that all vehicles agree on the highest service level is greater than $98\%$ with round lengths of at least $260$ms with at least four vehicle. Observe that the reliability is higher with more vehicles than with less. This is because of the transitivity property.


\begin{figure}[!ht]
    \centering
    \includegraphics[scale=0.57]{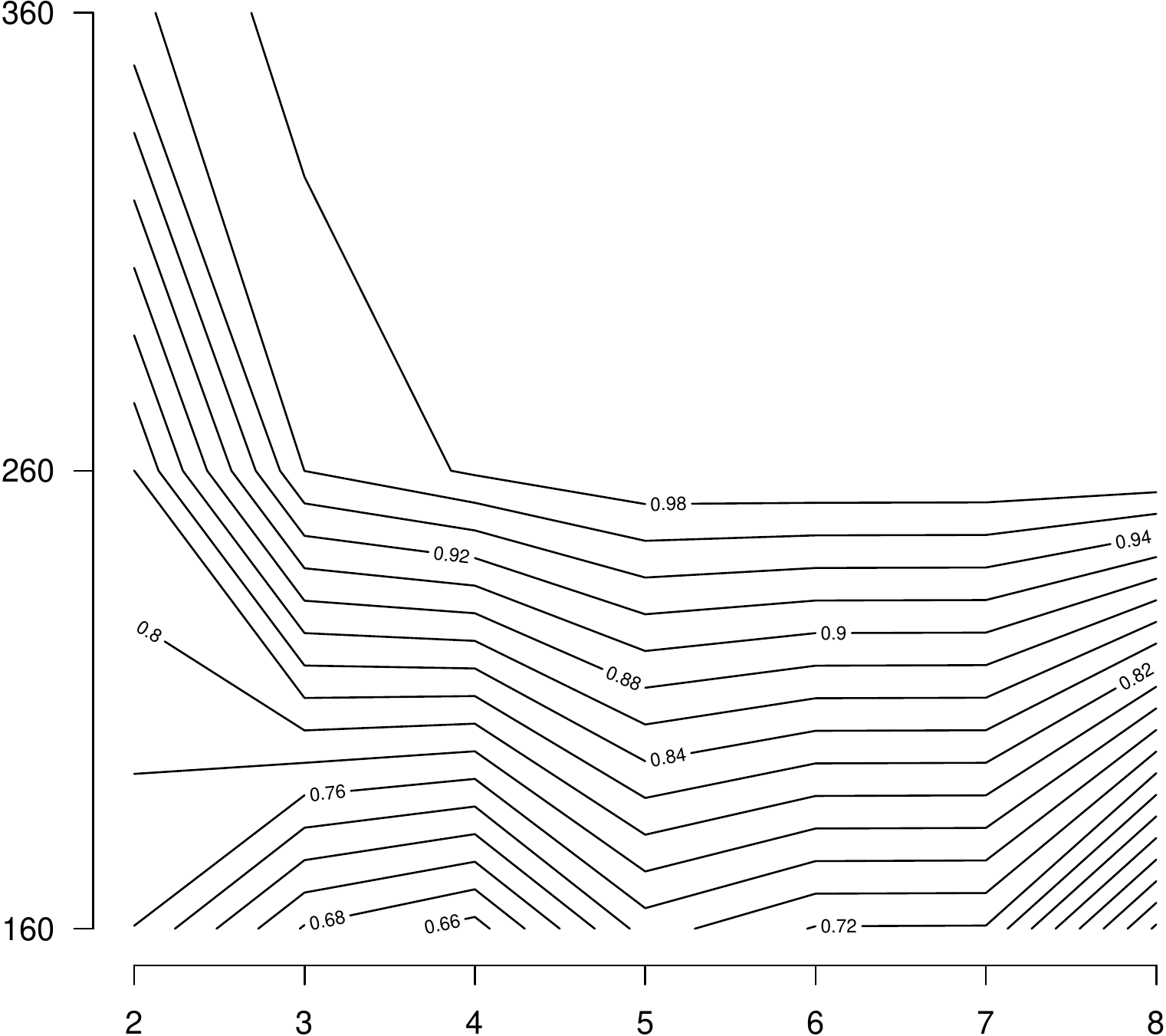}
     \caption{The percentage of time that all vehicles agree on the highest
    service level (number of vehicles vs the round length in milliseconds).}
    \label{fig:rel}
  \end{figure}

\begin{table}[ht!]
\centering
\begin{tabular}{|c|c|}
\hline
Number of Vehicles & Packet Drop Rate\\
\hline
2  & 0.1605357\\
3  & 0.1436347\\
4  & 0.159418 \\
5  & 0.141237 \\
6  & 0.1426173 \\
7  & 0.138037 \\
8  & 0.1713623 \\
\hline
\end{tabular}
\caption{Packet Drop Rate.}
\label{fig:packetdrop}
\end{table}

\Section{Discussion} 
\label{s:cva}
Autonomous vehicles have great capabilities to safely respond to unexpected events and keep short headways. However, keeping  short inter-vehicle distances without considering the nearby vehicles can result in hazardous situations. For example, rear-end crash as well as near-crash events  usually involve an action of the lead vehicle~\cite{lee2007analyses}. Cooperative vehicular systems have the potential to mitigate these events and improve the vehicle performance by exchanging information periodically as well as  their confidence level (\emph{validity}) about their own information. However, due to communication failures, vehicles may have inconsistent information and, therefore, low confidence level at the system level (even thought the individual vehicles may have high confidence). We present a cooperative vehicular application that exchanges periodically information and uses Algorithm~\ref{alg:TDDP} for dealing with communication failures. Our design demonstrates that, even though the presence of communication failures can lead to disagreement about what should be the joint validity value for a particular communication round, this can only happen for a period of at most one round, and thus tolerated by the vehicular control algorithm. We exemplify our approach using the adaptive cruise control (ACC) and vehicle platooning applications because their deployment environments requires dealing with communication uncertainties. In such cooperative systems, the vehicles has to jointly decide which application to use, ACC or platooning, according to the system service level $s$, which Algorithm~\ref{alg:TDDP} decides on its value.

We do not aim at designing new vehicular systems, but rather to exemplify how the proposed solution helps to guarantee the safety in existing cooperative vehicular applications, which operate in environments that include communication uncertainties. In our approach, while the vehicles are aware that nearby vehicles have a high level of certainty, they perform a fully cooperative operational mode to improve their performance. However, when this cannot be determined beyond any doubt, they switch to the autonomous operational mode to maintain high safety levels.

Adaptive Cruise Control (ACC) and Vehicular Platooning adjust the vehicle velocity so that they keep a predefined and safe headway. ACC sets the headway according to the vehicles in its direct line-of-sight. In other words, this application relies merely on on-board sensors. Vehicular platooning applications (or cooperative adaptive cruise control applications), however, do exchange information among vehicles and jointly aim at reducing air friction and energy consumption. They achieve such cooperative objectives by keeping shorter inter-vehicle distance than the autonomous ACC application. We show how to use the protocol solution for cooperative vehicle platooning with ACC as a base-line application.

In platooning, the vehicles exchange vectorial variables, $s$, which contain the vehicles' location, velocity and the highest level service that they can support. The service level provides the currently known bounds on the information error (see table~\ref{fig:saftyKernelParameters1}), as well as operational parameters, such as headway and acceleration bounds, where unbounded error means that the vehicle cannot determine it, say, due to a faulty component. Note that even thought our example considers merely three service levels, the extension to a scheme with more levels is straightforward. We assume that vehicles have the capability to determine the errors with high confidence level. We also assume that vehicles can determine the relative position of the vehicle ahead using on-board sensors within an estimated (bounded) error.

\begin{table}[ht!]
\centering
\begin{tabular}{|c|c|c|}
\hline
Level of Service & Loc. Err. ($P_\epsilon$) & Velocity Error ($S_\epsilon$)\\
\hline
High & $P_\epsilon \leq L$ & $S_\epsilon  \leq S$ \\
Medium & Unbounded $P_\epsilon$ & $S_\epsilon  \leq S$  \\
Low   & Unbounded $P_\epsilon$ & Unbounded $S_\epsilon$\\
\hline
\end{tabular}
\caption{$L$ and $S$ are constant values known by all participants.}
\label{fig:saftyKernelParameters1}
\end{table}

\begin{table}[ht!]
\centering
\begin{tabular}{|c|c|c|}
\hline
Level of Service & Headway & Acc Bound\\
\hline
High& $H_1$ & $\mathcal{A}_1$  \\
Medium & $H_2$ & $\mathcal{A}_2$ \\
Low   & $H_3$ & $\mathcal{A}_3$ \\
\hline
\end{tabular}
\caption{$H_{(\cdot)}$ are constant values and $\mathcal{A}_{(\cdot)}$ are constant acceleration bounds
such that $H_1 < H_2 < H_3$ and $\mathcal{A}_1 \subset \mathcal{A}_2 \subset \mathcal{A}_3$.}
\label{fig:saftyKernelParameters}
\end{table}

\begin{algorithm}[th!]
{\fontsize{9pt}{10pt}\selectfont
\begin{algorithmic}[1]
\STATE   {\bf Executes} the protocol that Algorithm~\ref{alg:TDDP} presents.

\STATE {\bf function} $decide(s)$
	\RETURN $\min_{\forall j \in s}(s[j].localLoS)$
~~\\
~~\\	

\STATE {\bf function} $readState()$
	\STATE Let $V$ and $localLoS$ be the operation information and maximum local service level that it supports, respectively, of $p_i$
	\RETURN $(localLoS, V)$
~~\\
~~\\

\STATE {\bf function} $controlLoop(data, LoS)$
	\IF{$p_i$ is the platoon leader}
	 	 \STATE Use $data$ and  acceleration bounds provided in Table~\ref{fig:saftyKernelParameters1} according to $LoS$ to maintain the cruise velocity if possible
	 \ELSE
	 	\STATE Use $data$ and  acceleration bounds provided in Table~\ref{fig:saftyKernelParameters1} according to $LoS$ to maintain the headway given in Table~\ref{fig:saftyKernelParameters}.
	 \ENDIF
	
\end{algorithmic}
\caption{Cooperative vehicle platooning with ACC as a base-line application (code for vehicle $p_i \in members$).}
\label{alg:platoonAdaptiveCruiseControl}
}
\end{algorithm}

Algorithm~\ref{alg:platoonAdaptiveCruiseControl} executes an instance of the protocol (that Algorithm~\ref{alg:TDDP} presents) and implement the interface functions $readState$, $controlLoop$ and $decide$. The function $readState$ returns the $p_i$'s local service level (see Table~\ref{fig:saftyKernelParameters1}) as well as the operational information (localization, heading, velocity, etc.). The function $decide$ returns the minimum local service level in the data structure $s$ so that all vehicles can meet the required constrains. The main functionality is implemented in $controlLoop$ function. It uses the information of all the vehicle in $data$ to determine the velocity and acceleration for the next round according to the cooperative service level using the parameters in Table~\ref{fig:saftyKernelParameters}. Note that in the baseline application, ACC, vehicles can base their decision on sensory information from onboard sources. We assume that each operation mode is proven to be safe provided that the information meets the requirements, i.e., the errors are within the bounds that are given in Table~\ref{fig:saftyKernelParameters}. For the worst case scenario, the behavior of the platoon can be influenced by vehicles that are not part of the platoon. This is because some events can cause cascade effects if they occur during the communication failures. We observe that the period during which the system switches from the highest service level to the lowest is a critical time. 

The safety provision in Algorithm~\ref{alg:platoonAdaptiveCruiseControl} depends directly on the mechanical constraints and the parameters' election. From the previous section, it is reasonable to consider rounds of length at least $260$ milliseconds. Thus, the headway can be determined from the round length and the error bounds on the information. We observe that Algorithm~\ref{alg:platoonAdaptiveCruiseControl} can reduce the collision risk by enforcing the vehicles to operate in a common service level that has been proven to be safe according to the information quality that is associated with that level.


\subsection*{Example}
As an illustrative example of the propose solution, we consider three vehicles in the following worst case scenarios of two implementations: (1) a vehicular platooning application that does use the proposed solution for for its back-off strategy in the presence of communication failures, and (2) an implementation of vehicular platooning that does run Algorithm~\ref{alg:platoonAdaptiveCruiseControl}.

\begin{figure}[!ht]
    \centering
    \includegraphics[scale=0.8]{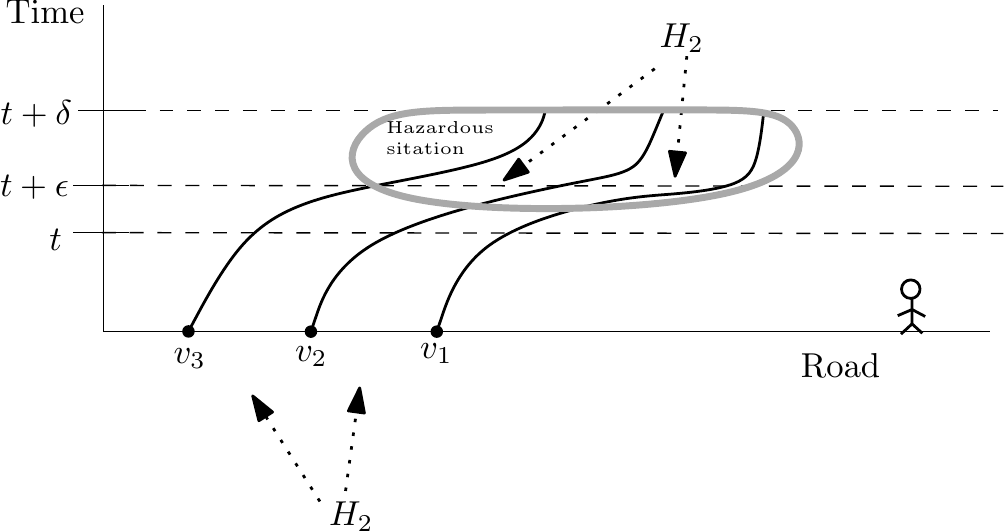}
    \caption{Vehicles in Platoon that do not include the proposed protocol.}
    \label{fig:exm1}
  \end{figure}

Let $v_1, v_2, v_3$ be the vehicles such that $v_1$ is leading the platoon followed by $v_2$ and $v_3$ as depicted in Figure~\ref{fig:exm1}. Assume that vehicles are driving on platooning with operational parameters given by the medium service level in Table~\ref{fig:saftyKernelParameters}. Therefore, they keep a headway of $H_2$. Suppose that at time $t$, $v_2$  starts loosing the messages from $v_3$ for $\delta$ time. Further, assume that at time $t+\epsilon$ , $v_2$ losses the messages from $v_1$  for $\delta - \epsilon$ time and at the same time $v_1$ requires to decelerate due to an obstacle, for example a pedestrian. Let us assume that $\epsilon$ and $\delta$ are at least two times the round length.

{\bf Platooning with back-off strategy that does not include the proposed solutions. }
Since $v_2$ does not receive the messages from $v_3$ during $[t, t+\delta]$, it is unaware whether $v_3$ continues operating on platooning. Thus, $v_2$ continues operating on platooning and assumes that it is the last vehicle in it. At time $t+\epsilon$, $v_2$ starts loosing messages from $v_1$ and consequently switches to the back-off strategy in the next round. However, since $v_1$ requires to brake, $v_2$ uses the acceleration bounds in $\mathcal{A}_3$. But $v_3$ continues operating on platooning during $[t, t+\delta]$, since it is unaware that $v_2$ is not receiving messages from $v_1$ and $v_3$. By definition, the system is not safe during $[t+\epsilon, t+\delta]$, since the platoon has only been proved to be safe when the headway is at most $H_2$ and acceleration bounds are in $\mathcal{A}_2$.

{\bf Platooning  using Algorithm~\ref{alg:platoonAdaptiveCruiseControl}.}
From the algorithm property that the uncertainty does not hold for more than one round,  $v_1$ and $v_3$ will be aware that at least one vehicle has a communication failure in the next round. Therefore, all switch to the lowest service level and start opening space to keep a headway of $H_3$. Thus, at time $t+\epsilon$ they have larger inter-vehicle distances which reduce the cascade effects. Observe that for an $\epsilon$ less than two round lengths, the problem  also occurs in this approach. Indeed, every cooperative vehicular application that relies on communication suffers from this problem. However, we believe that our approach minimizes the effects.

\begin{figure}[!ht]
    \centering
    \includegraphics[scale=0.8]{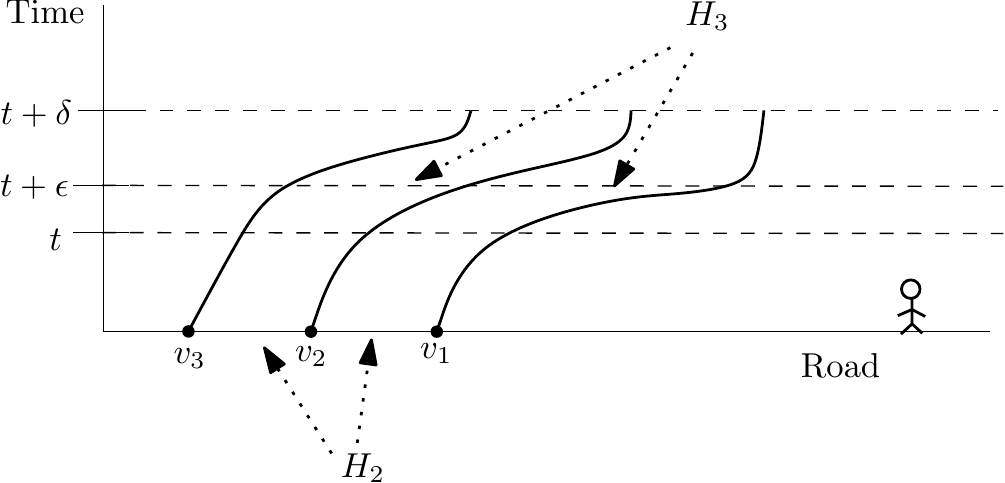}
    \caption{Vehicles in Platoon that include the proposed protocol.}
    \label{fig:exm2}
  \end{figure}

\Section{Conclusion} \label{s:con}
We have proposed an efficient protocol that can be used in safety-critical cooperative vehicular applications that have to deal with communication uncertainties. The protocol guarantees that all vehicles will not be exposed, for more than a constant time, to risks that are due to communication failures. We demonstrate correctness, evaluate performance and validate our results via ns-3 simulations. We also showed how vehicular platooning can use the protocol for maintaining system safety.

The proposed solution can be also extended to other cooperative vehicular applications, such as intersection crossing, coordinated lane change, as we demonstrated using the Gulliver test-bed~\cite{DBLP:conf/vtc/PahlavanPS12,DBLP:conf/mobiwac/PahlavanPS11} during the KARYON project~\cite{DBLP:conf/icdcsw/CasimiroPPS14}.~\footnote{Demonstration videos are available via \url{www.gulliver-testbed.net/documents}} Moreover, we have considered the simplest multi-hop communication primitive, i.e., gossip with constant retransmissions. However, that communication primitive can be substitute with a gossip protocol that facilitate a greater degree of fault-tolerance and better performance. This work opens the door for the algorithmic design and safety analysis of many cooperative applications that use different high-level communication primitives.

\end{document}